\newtheorem{lemma}{Lemma}
\newtheorem{theorem}{Theorem}
\newtheorem{corollary}{Corollary}
\title{A Linear Algorithm for Finding the Sink of Unique Sink Orientations on Grids}
\author{Xiaoming Sun, Jialin Zhang, Zhijie Zhang}
\begin{document}
\maketitle

\begin{abstract}
  An orientation of a grid is called \emph{unique sink orientation} (USO) if each of its nonempty subgrids has a unique sink. Particularly, the original grid itself has a unique global sink. In this work we investigate the problem of how to find the global sink using minimum number of queries to an oracle. There are two different oracle models: the \emph{vertex query} model where the orientation of all edges incident to the queried vertex are provided, and the \emph{edge query} model where the orientation of the queried edge is provided. In the 2-dimensional case, we design an \emph{optimal} linear deterministic algorithm for the vertex query model and an almost linear deterministic algorithm for the edge query model, previously the best known algorithms run in $O(N\log N)$ time for the vertex query model and $O(N^{1.404})$ time for the edge query model.
 \end{abstract}

\section{Introduction}
In this paper we consider a special type of $d$-dimensional \emph{grid}, which is the Cartesian product of $d$ complete graphs. Each pair of vertices of the grid has an edge if and only if they are distinct in exactly one coordinate. A \emph{subgrid} is the Cartesian product of cliques of the original complete graphs. Recall that in an oriented graph a vertex with zero outdegree is called a sink. A \emph{unique sink orientation} (USO) of a grid is an orientation of its edges such that each of its nonempty subgrids (including the original grid) has a unique sink. Traditionally, an oriented grid with the above property is called a \emph{grid USO}.

The computational problem now is to find the unique global sink of a grid USO. Two different oracle models were introduced in the literature \cite{szabo2001unique,gartner2008unique}, namely the \emph{vertex query} model and the \emph{edge query} model. A vertex query reveals the orientation of all incident edges of the queried vertex, whereas an edge query returns the orientation of the queried edge. We count for the time overhead only the number of queries to the oracle.

In this paper, we restrict our main attention to the sink-finding problem on a 2-dimensional grid USO, see Figure \ref{fig-planar-and-cycle} (a) for an instance. There are two reasons. As is well-known, a planar grid USO must be \emph{acyclic} \cite{gartner2008unique}. On the contrary, a $d$-dimensional grid USO with $d>2$ may contain cycles. Figure \ref{fig-planar-and-cycle} (b) depicts a cyclic \emph{cube} (a grid with each of its dimensions having size two). The acyclicity of a planar grid USO allows us to design algorithms that enhance the ``rank'' of queried vertices step by step and finally reach the unique sink. Besides, a fast algorithm running in the lower dimensional case may improve the upper bound in the general case, due to the \emph{inherited grid USO} introduced by G\"{a}rtner et al. \cite{gartner2008unique}.

\begin{figure}
  \centering
  \includegraphics[width=6cm]{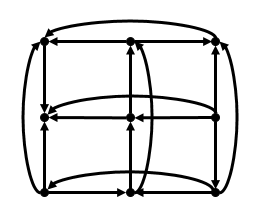}
  \includegraphics[width=5cm]{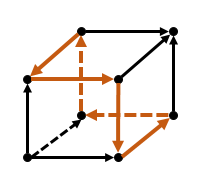}
  \caption{(a) a 2-dimensional grid USO. (b) a 3-cube USO with an cycle.}\label{fig-planar-and-cycle}
\end{figure}

However, the vertex query model seems to be unreasonable when fixing the dimension. Since in practice it takes linear (or polynomial) time to implement a vertex query, while the total number of vertex is polynomial. The time for a vertex query is never negligible compared with the number of queries. There are several reasons to justify the vertex query model. First, the vertex query model is theoretically simpler than the edge query model in that it is easy to formally capture all the information coming from a single vertex query. Second, the number of vertex queries is a good measure of complexity for a general grid USO of $d$ dimension and luckily, due to the inherited grid structure, algorithms running on a fixed-dimensional grid USO may be adapted to a general $d$-dimensional grid USO. Third, it turns out that our algorithm under the vertex query model serves as a black box when addressing the more natural and practical edge query model. In a word, though being mostly of theoretical interest for the grid USO of fixed dimension, the study of the vertex query model still has potential practical applications.

The sink-finding problem on a planar grid USO has an intuitive interpretation. Assume we have a matrix as input. Only numbers in the same row or in the same column are allowed to be compared. Each submatrix has exactly one minimum number. How many comparisons do we need to find the global minimum number?

The grid USO of dimension two serves as a simple combinatorial framework to many well-studied optimization problems. The first example is the \emph{one line and $N$ points} problem, first studied in \cite{gartner2003one}. Suppose there are $N$ points in general position in the plane and one vertical line to which none of those $N$ points belongs. There is a line segment between a pair of points if and only if this segment intersects with the vertical line. The problem asks to find the unique segment that has the lowest intersection with the vertical line. This problem can be recast to the sink-finding problem of an implicit planar grid USO in the following way \cite{felsner2005grid}. Each line segment defines a grid vertex. Two line segments are adjacent if and only if they share exactly one endpoint. The higher one has an oriented edge to the lower one. The total orientation turns out to be an USO. The lowest segment corresponds to the global sink.

Another optimization problem that can be considered as a special case of the sink-finding problem on a planar grid USO is the linear programs on $(N-2)$-polytopes with $N$ facets. Felsner et al. \cite{felsner2005grid} showed that the vertex-edge graph of such a polytope, oriented by the linear objective function, is isomorphic to a planar grid USO. The global sink, again, corresponds to the optimal solution of this special type of linear programs. It is of interest to take a look at how the algorithms devised for finding the sink of the planar grid USO go on the vertex-edge graph of the above polytope. We provide such an example in Section 3.

We have seen some relationship between the grid USO (of dimension two) and linear programs. Indeed, one main motivation of the study of the (general) grid USO is that it is closely related to linear programs. It is generally known that although there are polynomial algorithms for solving linear programs, such as the \emph{ellipsoid} algorithm by Khachiyan \cite{khachiyan1980polynomial} and the \emph{interior-point} algorithm by Karmarkar \cite{karmarkar1984new}, both of them are not \emph{strongly} polynomial. It remains open whether there is such an algorithm. And also it is unknown whether there exists a pivoting rule such that the simplex method runs in polynomial time. Several well-known (randomized) pivoting rules, such as \textsc{Random}-\textsc{Edge}, \textsc{Random}-\textsc{Facet} \cite{kalai1992subexponential,matouvsek1996subexponential} and \textsc{Random}-\textsc{Bland} \cite{bland1977new}, have failed to reach such an bound \cite{matouvsek1994lower,matouvsek2006random,friedmann2011subexponential,friedmann2014random}. It turns out that the unique sink orientation may help devise an outperforming algorithm to solve linear programs. Holt and Klee \cite{holt1999proof} showed that an orientation of a polytope is an \emph{acyclic unique sink orientation} (AUSO) with the Holt-Klee property if it is induced from an LP instance. The Holt-Klee property states that the number of vertex-disjoint directed paths from the source to the sink equals to the number of the neighbours of the source (or the sink equivalently) in every subgrid. Furthermore, G\"{a}rtner and Schurr \cite{gartner2006linear} proved that \emph{any} LP instance in $d$ nonnegative variables defines a $d$-dimensional cube USO. The sink of this cube corresponds naturally to an optimal solution to the LP. The vertex query oracle comes down to Gaussian elimination. A polynomial sink-finding algorithm would yield a corresponding algorithm to solve linear programs.

Besides linear programs, the underlying combinatorial structures of many other optimization problems are actually a grid USO. An important example is the \emph{generalized linear complementarity problem} over a P-matrix (PGLCP), first introduced by Cottle and Dantzig \cite{cottle1970a}. G\"{a}rtner et al. \cite{gartner2008unique} showed that this problem can be recast to the sink-finding problem of an implicit grid USO.

\subparagraph{Previous work.} The sink-finding problem of a grid USO was first put forward formally and studied by Szab{\'o} and Welzl \cite{szabo2001unique}, where they restricted their attention to a $d$-dimensional cube. They designed the first nontrivial deterministic and randomized algorithms which use $O(c^d)$ vertex queries for some constant $c<2$. Later G\"{a}rtner et al. \cite{gartner2008unique} extended the two oracle models to a $d$-dimensional grid USO. In that paper they investigated several properties of a grid USO and introduced randomized algorithms for both oracle models. However, no nontrivial deterministic algorithm was found for a $d$-dimensional grid USO at that time. Attempting to find such an algorithm, Barba et al. \cite{barba2016deterministic} first paid their attention to the planar case.

Here we state some known results in the planar case, given a 2-dimensional grid USO with $m\times n$ vertices. Assume that $N=m+n$. In the randomized setting, G\"{a}rtner et al. \cite{gartner2008unique} prove an upper bound of $O(\log m\cdot\log n)$ for the vertex query model, against a lower bound of $\Omega(\log m+\log n)$ claimed by Barba et al. \cite{barba2016deterministic}. It is necessary to mention the performance of the most natural \textsc{Random}-\textsc{Edge} algorithm on the planar grid USO, which chooses the next queried vertex randomly from the out-going neighbours of the current one. G\"{a}rtner et al. \cite{gartner2003one} proved it runs in $\Theta(\log^2 N)$ under the Holt-Klee condition and Milatz \cite{DBLP:journals/corr/Milatz17} extended this result for the general planar grid USO. In the edge query model, the unique sink of a 2-dimensional grid USO can be obtained with $\Theta(N)$ queries in expectation \cite{gartner2008unique}. In the deterministic setting, Barba et al. \cite{barba2016deterministic} exhibit an algorithm using $O(N\log N)$ vertex queries to find the sink and another algorithm using $O(N^{\log_4 7})$ edge queries. In particular, they introduced an $O(N)$ algorithm for the vertex query model under the Holt-Klee condition \cite{holt1999proof}.

\subparagraph{Our contributions.} The main contribution of our paper is Lemma \ref{row and column}, which states that we are able to exclude certain row and certain column from being the global sink after querying a linear number of vertices in some way. Based on it, we prove that $m+n-1$ vertex queries suffice to determine the global sink in the worst case, which coincide exactly with the lower bound for the vertex query model. Using it as a black box, we are able to exhibit an $O(N\cdot2^{2\sqrt{\log N}})$ deterministic algorithm for the edge query model. We note that for a $d$-dimensional grid USO our algorithm yields an upper bound of $O(\hat{N}^{\lceil d/2\rceil})$ for the vertex query model, where $\hat{N}$ denotes the sum of the sizes of every dimension.

\subparagraph{Paper organization.} In Section 2 we establish some notations for a planar grid USO and introduce some known properties. In Section 3 we handle the vertex query model and give an optimal algorithm. We address the edge query model based on the algorithm for the vertex query model in Section 4. And at last we conclude the paper in Section 5 with some open problems.

\section{Preliminaries}

First we provide some definitions and notations for the planar case. Denote by $K_n$ the complete graph with $n$ vertices. An \emph{$(m,n)$-grid} is the Cartesian product $K_m\times K_n$. Its vertex set is defined to be the Cartesian product $[m]\times[n]$, where $[n]:=\{1,2,\ldots,n\}$. Elements in $[n]$ are called \emph{coordinates}, and there are $N=m+n$ coordinates. Throughout this paper, we identify $[m]\times [n]$ with $K_m\times K_n$. A \emph{subgrid} is then the Cartesian product $I\times J$, where $I\subseteq[m]$ and $J\subseteq[n]$. For the sake of convenience, we say all vertices with the same first-coordinate form a \emph{row}. A \emph{column} is defined analogously. Hence two vertices are adjacent if and only if they are in the same row or in the same column. Denote by $u_{ij}$ the vertex at the cross of the $i$-th row and the $j$-th column.

Let $T_v(m,n)$ be the number of vertex queries needed in the worst case to find the sink of an $(m,n)$-grid USO deterministically and $T_v(n)$ for short when $m=n$. Similarly, $T_e(m,n)$ and $T_e(n)$ are defined for the edge query model. Following the tradition of the previous works \cite{szabo2001unique,barba2016deterministic}, in the vertex query model the global sink must be queried even if we have already known its position before it is queried. Thus, for instance, $T_v(1)=1$ and $T_v(2)=3$, instead of $T_v(1)=0$ and $T_v(2)=2$. Readers will find the benefit of this tradition shortly.

Now we introduce some known properties about the $(m,n)$-grid USO.

\begin{lemma} [\cite{gartner2008unique}]
  Every $(m,n)$-grid USO is acyclic.
\end{lemma}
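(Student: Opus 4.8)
The plan is to argue by contradiction and induction on the grid size $m+n$, reducing an arbitrary directed cycle to a highly structured one and then killing it with the unique-sink condition. First I would record the local consequences of the USO property on single rows and columns. A row is a clique $K_n$, and its subgrids are exactly the sub-tournaments on subsets of its columns; since a $3$-cycle is a sub-tournament with no sink, the unique-sink condition forces every row (and symmetrically every column) to be a \emph{transitive} tournament, i.e.\ to carry a total order whose minimum is that line's sink. Two facts follow: consecutive comparisons can be composed (transitivity), and an edge between two vertices sharing a row or column always exists and points toward whichever endpoint is closer to that line's sink.

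Next I would take a \emph{shortest} directed cycle $C$ and extract structure. The key observation is that $C$ has no chord: if two non-consecutive vertices $p,q$ of $C$ were adjacent, then, whichever way the chord is oriented, it closes one of the two directed arcs between $p$ and $q$ into a strictly shorter directed cycle (both arcs have at least two edges, so the result is strictly shorter), contradicting minimality. Being chordless, $C$ cannot contain two consecutive edges lying in the same row or the same column, for transitivity would then supply a chord; hence $C$ alternates horizontal and vertical edges, has even length, and visits a set of distinct rows and distinct columns, each used by exactly two of its vertices. By the inductive hypothesis every proper subgrid is acyclic, so $C$ cannot be confined to a proper subgrid; therefore $C$ meets every row and every column, forcing $m=n$ and turning $C$ into a ``full'' alternating chordless directed cycle that uses each row and each column exactly twice.

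It remains to contradict the existence of such a full cycle, and this is the step I expect to be the main obstacle. Let $s=u_{a^*b^*}$ be the global sink; it lies on no directed cycle, so $s\notin C$, yet $C$ uses row $a^*$ and column $b^*$. My intended route is to delete the sink's row (or both its row and its column) to obtain a smaller grid USO, which is acyclic by induction, and then to \emph{reroute} $C$ around the deleted line so as to produce a directed cycle inside the smaller grid, the desired contradiction. The reroute would be driven locally by the unique-sink condition on a single $2\times2$ subgrid: near the horizontal cycle-edge $u\to u'$ in row $a^*$, the two incident vertical edges together with the $2\times2$ sink condition force one of the two ``corner'' detours to be consistently oriented, letting me replace the detour through row $a^*$ by one avoiding it. The difficulty is concentrated entirely here: a chordless alternating cycle is perfectly compatible with the sink condition of the \emph{single} spanned subgrid (one can exhibit such a $6$-cycle combinatorially), so the contradiction must come from combining the sink conditions of several subgrids at once. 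I would therefore spend the most care on checking that the local $2\times2$ detours chain together into a globally consistent directed cycle in the smaller grid; a viable fallback, should the reroute resist, is a parity/counting argument tracking the number of sinks across all $2\times2$ subgrids the cycle touches.
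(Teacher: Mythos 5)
The paper itself does not prove this lemma: it is imported from G\"{a}rtner et al.\ with a citation only, so there is no in-paper argument to compare yours against. Judged on its own merits, your preliminary reductions are all correct and cleanly argued: every row and column is a transitive tournament (no sub-tournament may lack a sink), a shortest directed cycle $C$ is chordless, hence alternates between horizontal and vertical edges, meets each row and each column in at most two vertices, and by induction on $m+n$ cannot lie in a proper subgrid, which forces $m=n$ and makes $C$ a full alternating cycle of length $2n$ using every row and every column exactly twice.

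However, the proposal is a reduction, not a proof: the entire content of the lemma is concentrated in the one step you explicitly leave open, and the concrete mechanism you offer for it fails. Consider the horizontal cycle edge $u\to u'$ in the sink's row $a^*$, with $w=u_{ij}\to u=u_{a^*j}$ the vertical edge entering $u$, and let $x=u_{ij'}$ be the fourth corner of the $2\times 2$ subgrid on $\{a^*,i\}\times\{j,j'\}$. Given $w\to u\to u'$, the unique-sink condition on that subgrid only excludes the orientation making it a directed $4$-cycle (namely $u'\to x$ together with $x\to w$); it is perfectly consistent with $x\to w$ and $x\to u'$, in which case $x$ is the source of that subgrid and the detour $w\to x$ is unavailable. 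The symmetric detour through $u_{i'j}$ can fail for the same reason, and the two $2\times2$ subgrids involved do not constrain each other enough to guarantee that one of the two reroutes survives. So the contradiction genuinely requires combining sink conditions of many subgrids at once (or some global invariant), exactly the difficulty you flag; until that step is carried out, or the parity/counting fallback is made precise, the argument does not establish acyclicity.
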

Suppose $G$ is an $(m,n)$-grid USO. This lemma allows us to define a partial order on the vertex set $[m]\times [n]$ of $G$. For arbitrary two vertices $u,v\in G$, define $u\succeq v$ if and only if either $u=v$ or there exists a directed path from $u$ to $v$. In other word we just say $u$ is \emph{larger} than $v$. The unique sink corresponds to the unique minimum vertex.

Barba et al. claimed a lower bound of $m+n-1$ for the vertex query model without a proof \cite{barba2016deterministic}. For the completeness we give a simple adversary argument of this lower bound.

\begin{lemma}
  $T_v(m,n)\geq m+n-1$.\label{lower-bound}
\end{lemma}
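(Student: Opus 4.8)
The plan is to prove the bound by an adversary argument. The adversary maintains a set $R\subseteq[m]$ of \emph{alive} rows and a set $C\subseteq[n]$ of \emph{alive} columns, initially $R=[m]$ and $C=[n]$, together with a partial orientation produced by its answers. I want to preserve a threefold invariant: (i) every answer returned so far is consistent with at least one genuine $(m,n)$-grid USO whose global sink lies in the alive subgrid $R\times C$; (ii) the orientation internal to $R\times C$ is still completely \emph{free}, so that $R\times C$ behaves like a fresh smaller instance on which the sink can be placed anywhere; and (iii) every vertex already queried lies in a dead row or a dead column, and is therefore guaranteed not to be the sink.

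The heart of the argument is the answering rule. When the algorithm queries $u_{ij}$ with $i\in R$, $j\in C$, and at least one of $R,C$ of size $\ge 2$, the adversary removes exactly one line from the alive set — row $i$ if $|R|\ge 2$, otherwise column $j$ — and declares that line to be \emph{topmost}, i.e.\ larger than every surviving alive line in the corresponding direction. The edges of $u_{ij}$ running into the still-alive part are oriented away from $u_{ij}$, which is consistent with $u_{ij}$ lying strictly above the alive region and hence never being the sink. If the queried vertex already sits in a dead row or column, the adversary simply answers according to the commitments it has already made, removing no alive line. In every case $|R|+|C|$ drops by at most one, and the eventual sink — whatever single row and single column survive — is never the queried vertex.

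The counting is then immediate. Initially $|R|+|C|=m+n$, and each query lowers this quantity by at most one; as long as $|R|+|C|\ge 3$ at least two alive lines of some type remain, and by invariant (ii) the free alive region can be completed so as to place the sink on either of them, so the sink is not yet determined and the algorithm cannot stop. Hence the algorithm must issue at least $m+n-2$ queries to drive $|R|+|C|$ down to $2$, that is to $R=\{r^\ast\}$ and $C=\{c^\ast\}$. By invariant (iii) none of these queries is the sink $u_{r^\ast c^\ast}$, so the convention that the sink must itself be queried forces one further query, giving $T_v(m,n)\ge (m+n-2)+1=m+n-1$, exactly as claimed in Lemma \ref{lower-bound}.

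The main obstacle is establishing invariant (i): that the answers can always be realized by an actual grid USO, which by the acyclicity lemma must be an acyclic orientation with a unique sink in \emph{every} subgrid, not merely in the $2\times2$ ones. The delicate point is how to orient the edges internal to a row (or column) at the moment it is eliminated: committing them to the column order of the alive region would destroy the freshness in invariant (ii), so each eliminated line must instead receive its own \emph{private} order on the transverse coordinates, decoupled from the alive region. I would realize the whole construction through an injective potential $f$ on $[m]\times[n]$, with edges oriented toward smaller values, whose dominant term is the elimination time of the outer of the two lines through a vertex and whose tie-breaking encodes the private orders; the task is to choose $f$ so that in every subgrid the vertex minimizing $f$ is simultaneously the minimum of its row and of its column, which is precisely the unique-sink condition. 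Verifying this compatibility between the global row/column elimination ranks and the private tie-breaks, so that no spurious local sink is created in any subgrid, is the technical core of the proof.
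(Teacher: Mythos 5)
Your adversary is essentially the paper's: while at least two rows are alive, each query in the alive region kills the queried vertex's row; once a single row survives, each query kills a column; so $m+n-2$ queries are needed before the alive region shrinks to one cell, and the convention that the sink must itself be queried forces one more. The skeleton and the counting are right. The one substantive issue is exactly the step you defer, namely invariant (i): that the adversary's answers are realizable by a genuine $(m,n)$-grid USO. You propose a general injective potential and leave the compatibility check as ``the technical core,'' but for your particular strategy this is far easier than you suggest, because the elimination order is rigid (all rows before all columns) and every eliminated line is declared larger than everything still alive. Concretely: order the rows by elimination time with the surviving row $r^\ast$ last, and orient every vertical edge from the earlier-eliminated row to the later one; inside each eliminated row make the vertex whose query killed it the row-sink, with the rest of that row's internal order arbitrary (this is your ``private order''); inside row $r^\ast$ order the columns by their elimination time with $c^\ast$ last. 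In any subgrid $I\times J$ the unique sink is the minimum, restricted to $J$, of the row of $I$ that comes last in this order: it has all edges incoming, every other vertex of that row has an outgoing horizontal edge inside $I\times J$, and every vertex of an earlier row has an outgoing vertical edge into that row. So the orientation is a USO, it is consistent with every answer given, and any USO on the alive region extends to one of this form, which is what invariant (ii) needs. The paper's proof uses precisely this row-dominant orientation (each queried vertex is made the sink of its row and the whole row points out to the remaining rows), which is why it can afford to be terse; adding the two sentences above would close your write-up, and the elaborate tie-breaking potential you sketch is not needed.
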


\begin{proof}
  Here is the answering strategy of the adversary. Let the first queried vertex be the sink of the first row. Make all vertices in this row point out to their adjacent vertices in other rows. Thus this vertex query eliminates exactly the first row and any query of the other vertices in this row gives no more information. By induction the $i$-th queried vertex eliminates exactly the $i$-th row and therefore $m-1$ vertex queries are necessary to eliminate all $m$ rows but the last row. At this time, we need $n$ vertex queries instead of $n-1$ to find the sink of the last row, recalling the definition of $T_v(m,n)$.
\end{proof}

\subparagraph{Induced grid USO.} Barba et al. \cite{barba2016deterministic} discovered a simple construction of an \emph{induced grid USO} from an $(m,n)$-grid USO, which helped a lot their design of algorithms for both oracle models. It's worth describing the construction in detail.

Assume $G$ is an $(m,n)$-grid USO. Let $\mathcal{P}=\{P_1,\ldots,P_k\}$ be a partition of $[m]$ and $\mathcal{Q}=\{Q_1,\ldots,Q_l\}$ be a partition of $[n]$. Each $P_i\times Q_j$ is a subgrid of $G$. Let $H$ be a $(k,l)$-grid with vertex set $\{P_i\times Q_j\mid i=1,\ldots,k,j=1,\ldots,l\}$. As before, two distinct vertices $x=P_i\times Q_j$ and $y=P_{i'}\times Q_{j'}$ are adjacent in $H$ if and only if they are in the same row or in the same column, i.e. $i=i'$ or $j=j'$. Suppose that $x$ and $y$ are adjacent, it remains to determine the orientation of edge $xy$ in $H$. Recall that $x$ and $y$ are subgrids of $G$ and therefore by the USO property $x$ and $y$ have unique sinks $u_x$ and $u_y$ in $G$, respectively. If $u_x$ has an outgoing edge to some vertex of $y$ in $G$, then we make $x$ point to $y$ in $H$. Otherwise we make $v$ point to $u$. This orientation is well-defined due to the acyclicity of an $(m,n)$-grid USO (see figure \ref{fig-induced}). Such a grid $H$ is called \emph{induced grid} of $G$. What's more, it was proved that the induced grid $H$ also suffices the USO property \cite{barba2016deterministic}:

\begin{lemma}
  Let $G$ be a 2-dimensional grid USO and $H$ be an induced grid constructed from $G$. Then $H$ is also a 2-dimensional grid USO and the sink of $H$ is the subgrid of $G$ which contains the sink of $G$.\label{induced}
\end{lemma}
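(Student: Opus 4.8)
The plan is to first reduce the USO property of $H$ to a single statement about its global sink, and then prove that statement directly. The crucial observation for the reduction is that the induced-grid construction is compatible with restriction: for $I \subseteq [k]$ and $J \subseteq [l]$, the subgrid of $H$ on the vertex set $\{P_i \times Q_j : i \in I,\, j \in J\}$ coincides with the induced grid obtained from the subgrid $G' := \left(\bigcup_{i \in I} P_i\right) \times \left(\bigcup_{j \in J} Q_j\right)$ of $G$ under the restricted partitions $\{P_i : i \in I\}$ and $\{Q_j : j \in J\}$. This holds because the orientation of an edge between two blocks $x$ and $y$ is determined solely by the sink $u_x$ of $x$ together with the $G$-edges leaving $u_x$ toward $y$, and both of these data are unchanged when passing from $G$ to $G'$ (the sink of $P_i \times Q_j$ is intrinsic to that subgrid, and the relevant edges have both endpoints inside $G'$). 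Since each such $G'$ is itself a grid USO, it suffices to prove the single claim: the induced grid of an arbitrary $(m,n)$-grid USO has a unique global sink, namely the block containing the global sink of $G$.

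To prove this claim, let $s$ be the global sink of $G$ and let $x_0 = P_{i_0} \times Q_{j_0}$ be the unique block containing $s$. First I would observe that $s$ is the sink of the subgrid $x_0$ (being the sink of every subgrid that contains it), so $u_{x_0} = s$. Since $s$ has no outgoing edge in $G$, the orientation rule forces $y \to x_0$ for every neighbour $y$ of $x_0$ in $H$; hence $x_0$ is a sink of $H$.

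For uniqueness, I would take any block $x = P_i \times Q_j$ with $x \neq x_0$ and show it is not a sink of $H$. As $s$ lies only in $x_0$, its sink $u_x$ differs from $s$, so $u_x$ is not the global sink of $G$ and therefore has an outgoing edge. The key step is that, because $u_x$ is the sink of its own block $P_i \times Q_j$, every $G$-edge joining $u_x$ to another vertex of that same block points into $u_x$; consequently any outgoing edge of $u_x$ must reach a vertex in a \emph{different} block $y$, which shares either the row-block index $i$ or the column-block index $j$ with $x$ and is thus adjacent to $x$ in $H$. By the orientation rule this gives $x \to y$, so $x$ is not a sink. Hence $x_0$ is the unique sink of $H$.

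Combining the two parts: applying the single claim to every subgrid $G'$ (via restriction-compatibility) shows that each subgrid of $H$ has a unique sink, so $H$ is a grid USO; and applying it to $G$ itself shows its global sink is the block $x_0$ containing $s$. I expect the main obstacle to be making the restriction-compatibility argument precise — one must verify carefully that the sink of a block and the between-block edge orientations are genuinely intrinsic to the relevant subgrid of $G$ and independent of the ambient grid — since the remaining steps follow directly from the observation that a block sink absorbs all intra-block edges, together with the defining orientation rule.
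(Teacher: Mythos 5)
The paper does not actually prove this lemma; it is imported from Barba et al.~\cite{barba2016deterministic} as a known fact, so there is no in-paper proof to compare against. Judged on its own, your argument is correct and complete, and the structure is the natural one. The restriction-compatibility step is the right reduction, and you justify it for exactly the right reasons: the sink $u_x$ of a block is intrinsic to that block (a vertex that is the sink of a subgrid remains so regardless of the ambient grid), and the $G$-edges the orientation rule inspects have both endpoints inside $G'$, so the subgrid of $H$ indexed by $I\times J$ really is the induced grid of $G'$ under the restricted partitions, and $G'$ is itself a grid USO. The two halves of the global-sink claim are also sound: the block $x_0$ containing the global sink $s$ has $u_{x_0}=s$ with no outgoing edges, so the rule forces every neighbour to point at $x_0$; and for any other block $x$, its sink $u_x\neq s$ has some outgoing $G$-edge which, because every intra-block edge at $u_x$ is incoming, must leave the block along a row or column and therefore lands in a block sharing a row- or column-index with $x$, giving $x$ an outgoing edge in $H$. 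The one ingredient you use without verification is that the orientation rule is well defined (that ``$u_x$ points into $y$'' and ``$u_y$ points into $x$'' cannot both hold or both fail for adjacent blocks); the paper asserts this as part of the construction, so taking it as given is legitimate, and in any case your observation that a block sink absorbs all intra-block edges is precisely what one needs to check it (apply the USO property to the subgrid $x\cup y$ of $G$).
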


\begin{figure}
  \centering
  \includegraphics[width=5.5cm]{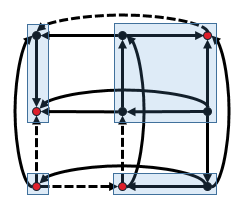}
  \includegraphics[width=3.5cm]{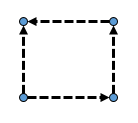}
  \caption{(a) let $P_1=\{1,2\},P_2=\{3\},Q_1=\{1\}$ and $Q_2=\{2,3\}$. The $(3,3)$-grid USO is divided into four subgrids accordingly. (b) The induced grid USO defined according to the partition.}\label{fig-induced}
\end{figure}

\section{Vertex Query Model}

This section is contributed to the vertex query model. We first explore an important combinatorial property of the 2-dimensional grid USO. Then we exploit this property to obtain an optimal deterministic algorithm using $m+n-1$ vertex queries in the worst case. Next we provide an intuitive interpretation of this algorithm in connection with linear programs on $N-2$-polytope with $N$ facets. To end this section, we adapt our algorithm to higher dimensional case.

Let $G$ be an $(m,n)$-grid USO. For a queried vertex $v=u_{ij}$, let $I_v\succeq[m]$ be the collection of the first-coordinates such that $u\succeq v$ for any $u\in I_v\times\{j\}$. Note that $v$ itself is included in $I_v\times\{j\}$. $J_v\succeq[n]$ is defined analogously. Clearly $v$ is the unique sink of the subgrid $I_v\times J_v$. Hence if $v$ is not the global sink of $G$, every vertex in the subgrid $I_v\times J_v$ is excluded from being the global sink, since otherwise there would be two sinks in $I_v\times J_v$. Suppose $v$ is not the global sink, the query of $v$ \emph{eliminates} exactly the corresponding subgrid $I_v\times J_v$.

Assume that $m=n$. First query arbitrary $n$ vertices $\{v_1,\ldots,v_n\}$ in distinct rows and distinct columns. Suppose w.l.o.g. that none of them is the sink of $G$, then the subgrids $I_{v_i}\times J_{v_i}$ are eliminated, for $i=1,\ldots,n$. Now there are a lot of \emph{eliminated vertices}. The lemma below answers in a way how many such vertices there are.

\begin{lemma}
  \label{row and column}
  Let $G$ be an $(n,n)$-grid USO. After querying arbitrary $n$ vertices of $G$ in distinct rows and distinct columns and eliminating corresponding subgrids, at least one row and one column are eliminated.
\end{lemma}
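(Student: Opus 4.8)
The plan is to prove the equivalent statement that at least one \emph{column} is eliminated; the claim for rows then follows by transposing the grid, since the transpose of an $(n,n)$-grid USO is again an $(n,n)$-grid USO in which the $n$ queried vertices still lie in distinct rows and distinct columns, with ``column eliminated'' and ``row eliminated'' interchanged. Write $w_i=u_{i,\pi(i)}$ for the queried vertex in row $i$, where $\pi$ is the permutation recording the columns of the queries, so that the queried vertex in column $j$ is $w_{\pi^{-1}(j)}$. The eliminated set is $E=\bigcup_i I_{w_i}\times J_{w_i}$, and I would argue by contradiction, assuming that \emph{every} column contains a surviving (non-eliminated) vertex.

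First I would record the two elementary constraints pinning down such a survivor. Fix a column $j$ and let $x_j=u_{e(j),j}$ be a surviving vertex of column $j$. Since $j$ lies in the row-set $J$ of the query $w_{\pi^{-1}(j)}$ sitting in column $j$ (a vertex's own column always belongs to its $J$), the fact that $x_j$ escapes this query forces $e(j)\notin I_{w_{\pi^{-1}(j)}}$, i.e. $x_j\prec w_{\pi^{-1}(j)}$ within column $j$: every survivor lies strictly below the queried vertex of its own column. Applying the same reasoning to the query $w_{e(j)}$ sitting in row $e(j)$ gives, symmetrically, $x_j\prec w_{e(j)}$ within row $e(j)$. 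By the acyclicity of grid USOs these two comparabilities are genuine directed edges pointing \emph{towards} $x_j$, one along its column and one along its row.

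Next I would rigidify these edges into a forced local order using the $2\times2$ unique-sink property of a USO -- precisely the fact already used to show that $v$ is the sink of $I_v\times J_v$. The survivor $x_j$ sits in row $e(j)$, whose queried vertex lies in a new column $\psi(j)=\pi(e(j))$, and this queried vertex is exactly $w_{e(j)}=w_{\pi^{-1}(\psi(j))}$; chasing $\psi$ produces a cyclic sequence of columns. Along consecutive columns of the cycle I would examine the $2\times2$ subgrid spanned by the two escape rows and the two escape columns: the two edges found above make the shared queried vertex the unique \emph{source} of that square, so its opposite corner is the unique \emph{sink}. Reading off these forced squares yields, after relabelling the cycle as $1,2,\dots$, a strict three-term chain $u_{\sigma(c-1),c}\succ u_{\sigma(c),c}\succ u_{\sigma(c+1),c}$ in each column $c$ of the cycle (with $\sigma=\pi^{-1}$), together with a mirror chain in each relevant row.

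The final and hardest step is to convert this forced local structure into a global contradiction. I would begin with a minimal-counterexample reduction on $n$: if $\psi$ has a proper cycle of length $t<n$ whose escape rows are distinct, then the corresponding $t$ rows and $t$ columns span a strictly smaller square sub-USO (a subgrid, hence itself a USO) carrying $t$ queries in distinct rows and columns in which every column still survives, contradicting minimality; this forces $\psi$ to be a single $n$-cycle with the escape rows forming a permutation. In this transitive case the three-term chains overlap all the way around, and I expect to extract either a directed cycle among the queried vertices (impossible by acyclicity) or, as already occurs for $n=3$, several vertices that are simultaneously column- and row-minima and hence several global sinks (impossible in a USO). The main obstacle is exactly this closing argument, because the two constraints from the own-row and own-column queries are by themselves consistent once $n\ge4$; one must additionally use that each survivor resists elimination by \emph{all} $n$ queries, not merely its own two, and show that these extra incidence constraints, combined with the $2\times2$ rigidity, leave no acyclic unique-sink orientation.
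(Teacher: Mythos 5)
Your proposal is incomplete, and your own closing admission is accurate: the contradiction is never derived. Concretely, three things go wrong. First, the ``rigidification'' step is invalid: the two incoming edges you establish (from the queried vertex of $x_j$'s own column and of $x_j$'s own row) make $x_j$ the unique \emph{sink} of the relevant $2\times2$ square, and when you instead look at the square in which the shared queried vertex $w_{e(j)}$ has two outgoing edges, being a source of a $2\times2$ USO does \emph{not} force the opposite corner to be its sink --- the sink may be either of the two remaining corners --- so the three-term chains you want to read off around the cycle are not forced. Second, the minimal-counterexample reduction is shaky: the subgrid spanned by the escape rows and the cycle's columns need not contain the queried vertices of those columns, so it does not inherit a configuration of ``queries in distinct rows and columns'' to which minimality could be applied. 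Third, and decisively, the own-row and own-column constraints alone are satisfiable for $n\geq 4$ (as you note), so any correct argument must exploit that each survivor escapes all $n$ queries; no step of the proposal does this, and this is precisely where the mathematical content of the lemma lives.

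For comparison, the paper proves the lemma by a direct induction on $n$ rather than by contradiction. Relabel so that the queries form the diagonal $v_i=u_{ii}$, listed in a linear extension of the partial order (so there is never a path from $v_j$ to $v_i$ for $i<j$). Then $v_n$ cannot be the source of the $2\times2$ subgrid $H_{i,n}$ spanned by $v_i$ and $v_n$, hence each $i\in[n-1]$ lies in $I_{v_n}$ or $J_{v_n}$. Setting $I=I_{v_n}\cap[n-1]$ and $J=[n-1]\setminus I$, the block $I\times J$ is wholly eliminated by $v_n$, while the square blocks $I\times I$ and $J\times J$ each carry a diagonal of queries; the induction hypothesis supplies an eliminated row of $I\times I$ and an eliminated column of $J\times J$, which extend across $I\times J$ and the $n$-th column and row to a full eliminated row and column of $G$. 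If you wish to pursue your contradiction framing, you would in effect have to rediscover this decomposition: the statement you assume false is exactly the inductive invariant that makes the recursion close.
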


\begin{figure}
  \centering
  \includegraphics[width=10cm]{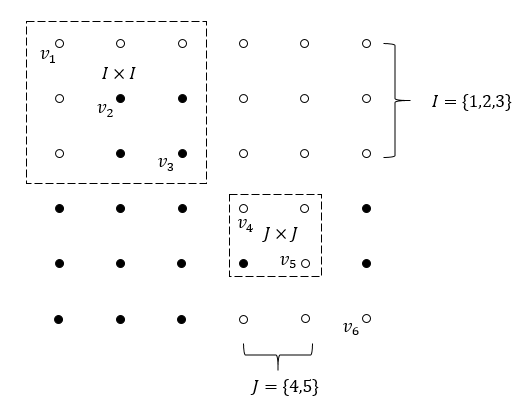}
  \caption{Induction on a $(6,6)$-grid USO. White vertices are eliminated. All vertices in $I\times\{6\}$ and $\{6\}\times J$ are larger than $v_6$.}\label{fig-induction}
\end{figure}

\begin{proof}
  We adopt induction on the scale $n$. There is nothing to prove in the trivial case $n=1$. Assume the lemma holds for smaller values of $n$. Also we assume that none of the $n$ vertices has ever been the global sink, since otherwise the lemma naturally holds. Note that shuffling rows and columns does not change the underlying structure of $G$. So by rearranging the coordinates, we may assume that all queried vertices $\{v_1,\ldots,v_n\}$ lie in the diagonal, i.e. $v_i=u_{ii}$, and further assume that for all $1\leq i<j\leq n$ either $v_i\succeq v_j$ or $v_i$ and $v_j$ are incomparable. Consider the $(2,2)$-subgrid $H_{i,j}$ spanned by $v_i$ and $v_j$. By the assumption there is no path from $v_j$ to $v_i$ in $H_{i,j}$. Hence by the USO property $v_j$ cannot be the source of $H_{i,j}$, which implies in $H_{i,j}$ there is at least one incoming edge of $v_j$. This simple fact was first observed by Barba et al. \cite{barba2016deterministic}.

  In particular, $v_n$ has at least one incoming edge in each subgrid $H_{i,n}$, which means either $i\in I_{v_n}$ or $i\in J_{v_n}$, for $1\leq i\leq n-1$. Let $I=I_{v_n}\cap[n-1]$ and $J=[n-1]\backslash I$. Clearly $J\subseteq J_{v_n}\cap[n-1]$, since $J\cap I_{v_n}=\emptyset$. Accordingly, we divide the $(n-1,n-1)$-grid spanned by $\{v_1,\ldots,v_{n-1}\}$ into subgrids $I\times I$, $J\times J$, $I\times J$ and $J\times I$. The subgrid $J\times I$ is of no relevance in our discussion. The subgrid $I\times J\subseteq I_{v_n}\times J_{v_n}$ and therefore all of its vertices are eliminated. Square subgrids $I\times I$ and $J\times J$ both contain queried vertices in their diagonals, respectively. So by the inductive assumption both of them contain one eliminated row and one eliminated column, respectively. Let $i\in I$ be the coordinate such that $\{i\}\times I$ is the eliminated row in $I\times I$. The subgrid $\{i\}\times J\subseteq I\times J$ and therefore is eliminated. The vertex $\{i\}\times\{n\}\in I_{v_n}\times\{n\}$ and is also eliminated. To sum up, the row $\{i\}\times[n]=\{i\}\times(I\cup J\cup{n})$ is eliminated. Similarly, let $j\in J$ be the coordinate such that $J\times\{j\}$ is the eliminated column in $J\times J$. The column $[n]\times \{j\}$ turns out to be an eliminated column of the original grid by the same argument. The proof now is completed. See Figure \ref{fig-induction} for an intuitive example.
\end{proof}

Lemma \ref{row and column} makes full use of the information from the queried vertices and naturally leads to the following algorithm depicted as Algorithm \ref{diagonal-algorithm}.

\begin{algorithm}
  \caption{Diagonal Algorithm}
  \label{diagonal-algorithm}
  \KwIn{An $(n,n)$-grid USO $G$}
  \KwOut{The unique sink of $G$}
  query arbitrary $n$ vertices in distinct rows and distinct columns\;
  \While{the sink has not been queried}
  {
  eliminate one row and one column, say the $i$-th row and the $j$-th column\;
  \If{two distinct queried vertex $u_{ij'}$ and $u_{i'j}$ are eliminated}
  {query vertex $u_{i'j'}$\;}
  }
\end{algorithm}

We first query arbitrary $n$ vertices in distinct rows and distinct columns (line 1). By Lemma \ref{row and column}, there are one row and one column excluded from being the global sink, say the $i$-th row and the $j$-th column (line 3). Now we need to handle a subgrid USO of scale $n-1$. Note that at most 2 queried vertices are eliminated, since each row (or column) contains exactly one queried vertex. If two distinct queried vertex $u_{ij'}$ and $u_{i'j}$ are eliminated, one more query of vertex $u_{i'j'}$ would make the subgrid contain $n-1$ queried vertices in distinct rows and distinct columns (line 4-6). Otherwise, the subgrid has already contained $n-1$ such queried vertices. At either case, we can apply Lemma \ref{row and column} again and at most another $n-1$ queries suffice to find the global sink. The argument above leads to the theorem below.

\begin{theorem}
  \label{2n-1}
  There exists a deterministic algorithm using $2n-1$ vertex queries in the worst case to find the unique sink of an $(n,n)$-grid USO.
\end{theorem}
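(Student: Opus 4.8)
The plan is to prove the theorem by directly analyzing the Diagonal Algorithm and bounding its cost with a telescoping argument driven by a single invariant. The invariant I would carry through the while loop is: at the start of each iteration the algorithm is working on a subgrid of scale $k$ (with $k$ descending from $n$) that contains exactly $k$ previously queried vertices, one in each of its rows and one in each of its columns, and the global sink of $G$ still lies inside this subgrid. The sink-containment part is immediate: a non-sink queried vertex $v$ eliminates only $I_v\times J_v$, all of whose vertices satisfy $u\succeq v$, and since the sink is the unique minimum it can never be $\succeq v$ for $v$ not the sink, so it is never eliminated. Note also that a subgrid of a USO is again a USO by definition, so Lemma \ref{row and column} is legitimately applicable at every level, the inherited orientations giving us exactly the elimination information the lemma needs.

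First I would check the base of the invariant: line 1 queries $n$ vertices in distinct rows and distinct columns, establishing the invariant at $k=n$ for a cost of $n$ queries. Then I would show that each iteration preserves the invariant while decreasing $k$ by one at a cost of at most one additional query. Applying Lemma \ref{row and column} to the $k$ queried vertices lets us eliminate a full row $i$ and a full column $j$, leaving a subgrid of scale $k-1$. The only delicate bookkeeping is how many queried vertices this removes: at most two, one per eliminated line. If they coincide, i.e. the queried vertex $u_{ij}$ sits at the intersection, the surviving $k-1$ queried vertices already occupy the $k-1$ remaining rows and columns one apiece and the invariant is restored for free. If instead two distinct queried vertices $u_{ij'}$ and $u_{i'j}$ are removed, then precisely row $i'$ and column $j'$ are left uncovered, and querying the single vertex $u_{i'j'}$ (line 5) repairs the invariant at a cost of one query.

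Finally I would assemble the count. Beginning at scale $n$, the algorithm performs $n-1$ reduction steps to reach scale $1$; at scale $1$ the surviving subgrid is a single vertex, which by the invariant is both the global sink and already queried, so the loop halts. The total is therefore at most $n+(n-1)\cdot 1=2n-1$ queries. I expect the main obstacle to be making the invariant airtight rather than any computation: specifically, confirming that the elimination data of the original queries survives restriction to each subgrid, and that the case split on removed queried vertices is exhaustive. It is worth noting that the convention $T_v(1)=1$ is exactly what pins the final tally at $2n-1$, since the last surviving vertex must still be counted as a query.
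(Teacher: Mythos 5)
Your proof is correct and follows essentially the same route as the paper: $n$ initial queries in distinct rows and columns, then repeated application of Lemma~\ref{row and column} with at most one repair query per scale reduction, for a total of $n+(n-1)=2n-1$. Your explicit loop invariant and the observation that the sink is never eliminated merely make rigorous what the paper states informally.
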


We can easily extend Algorithm \ref{diagonal-algorithm} to an arbitrary $(m,n)$-grid USO. Assume that $m<n$. First query $m$ vertices in distinct rows and distinct columns. Then the $m$ columns with queried vertices in them form an $(m,m)$-subgrid USO. By Lemma \ref{row and column}, one column is eliminated, and there remains $m-1$ queried vertices. Next, one more appropriate vertex query would again exclude one column. Repeat this procedure until there remains exactly $m$ columns, i.e. an $(m,m)$-subgrid USO, which contains $m-1$ queried vertices. Now we can eliminate both one row and one column at the same time after every vertex query and another $m$ queries suffice to determine the global sink. To conclude we have

\begin{theorem}
  There exists a deterministic algorithm using $m+n-1$ vertex queries in the worst case to find the unique sink of an $(m,n)$-grid USO.
\end{theorem}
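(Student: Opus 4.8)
The plan is to reduce the rectangular problem to the square one already settled in Theorem \ref{2n-1}, by first shaving columns until the grid becomes square and then invoking the Diagonal Algorithm. Assume without loss of generality that $m\le n$ (otherwise exchange the roles of rows and columns), the case $m=n$ being exactly Theorem \ref{2n-1}. I would begin, as in the square case, by querying $m$ vertices lying in distinct rows and distinct columns, at a cost of $m$ queries. These occupy all $m$ rows but only $m$ of the $n$ columns, which I will call the \emph{active} columns; the remaining $n-m$ columns contain no queried vertex.

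The first phase eliminates columns one at a time until exactly $m$ columns survive. The invariant I would maintain is that there are always $m$ rows and $m$ queried vertices sitting in distinct rows and distinct columns, so that the $m$ active columns together with the $m$ rows form an $(m,m)$-subgrid carrying a full diagonal of queried vertices. Applying Lemma \ref{row and column} to this active subgrid guarantees that at least one of its columns is eliminated; removing that column deletes exactly one queried vertex, leaving $m-1$ of them in distinct rows and columns. One additional query, placed in the now-empty row and in a fresh inactive column, restores the invariant and lets me repeat. Counting carefully, the first column is removed for free from the initial $m$ queries, and each of the remaining $n-m-1$ removals costs a single restoring query, so Phase 1 uses $n-m-1$ queries and terminates with an $(m,m)$-subgrid that already contains $m-1$ queried vertices in distinct rows and columns.

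The step I expect to be the main obstacle, and the crux of the whole argument, is justifying that an application of Lemma \ref{row and column} to the active subgrid eliminates a genuine column of the full grid while \emph{not} costing me a row. The point is an asymmetry coming from the fact that the active subgrid spans all $m$ rows but only part of the columns. If a column of the active subgrid is eliminated, then every one of its cells is barred from being the sink of that subgrid, hence from being the global sink, since the global sink must be the sink of every subgrid containing it; as this column spans all rows, the whole column is eliminated in the full grid. An eliminated row, by contrast, only rules out its cells in the active columns, and the global sink might still lie in this row inside an inactive column, so no row is lost. I would also need to verify that queried vertices already proven to be non-sinks may safely be retained in the diagonal configuration, since Lemma \ref{row and column} only requires $m$ queried vertices in distinct rows and columns, not that they remain sink candidates.

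Finally, Phase 2 finishes on the resulting $(m,m)$-square. It already holds $m-1$ queried vertices in distinct rows and columns, so one query of the unique missing diagonal vertex produces a full diagonal, after which the Diagonal Algorithm (Algorithm \ref{diagonal-algorithm}) applies: now that all columns are active, each invocation of Lemma \ref{row and column} eliminates a genuine row and a genuine column simultaneously, shrinking the square at a cost of at most one query per step. As in Theorem \ref{2n-1} this costs $m$ queries in total for Phase 2. Summing the three stages gives $m+(n-m-1)+m=m+n-1$ queries, matching the lower bound of Lemma \ref{lower-bound}.
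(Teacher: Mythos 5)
Your proposal is correct and follows essentially the same route as the paper: query $m$ vertices in distinct rows and columns, repeatedly apply Lemma \ref{row and column} to the $(m,m)$-subgrid of active columns to shave one column per extra query until the grid is square, then finish with the Diagonal Algorithm for $m$ more queries. Your explicit justification that an eliminated column of the active subgrid is a genuine eliminated column of the full grid (while eliminated rows cannot be discarded) is a point the paper leaves implicit, but the decomposition and the count $m+(n-m-1)+m=m+n-1$ are identical.
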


Combined with Lemma \ref{lower-bound}, this theorem implies that $T_v(m,n)=m+n-1$, so Algorithm \ref{diagonal-algorithm} is optimal.

As is mentioned before, the vertex-edge graph of a $(N-2)$-polytope with $N$ facets is isomorphic to an $(m,n)$-grid USO $G$ for some $m,n$ with $N=m+n$. There are totally $N$ coordinates in $G$. Each coordinate represents a facet of the original polytope. Note that a vertex $v=u_{ij}$ of $G$ is the intersection of some $(N-2)$ facets. The coordinates $i$ and $j$ mean that $v$ \emph{does not} lie in the corresponding facets. Two vertices are adjacent in the vertex-edge graph if and only if both of them belong to exactly the same $N-3$ facets. If in some way (e.g. by Lemma \ref{row and column}), the $i$-th row (or column) is excluded from being the global sink, we can deduce that the global sink \emph{must} lie in the corresponding facet. The procedure of Algorithm \ref{diagonal-algorithm} becomes rather clear. At each step, Algorithm \ref{diagonal-algorithm} arbitrarily queries a new vertex which is not adjacent to the previously queried vertices (line 1) in order to involve as many facets as possible. Once every vertex is adjacent to at least one queried vertex, it is guaranteed that the global sink must belong to certain facet or (intersection of several facets) (line 3). Later queries are indeed restricted in that facet (line 4-6). From the view of linear programs, such an algorithm is rather interesting.

Though being less practical than the edge query model, the vertex query model still has potential applications. One of them is to reach a better upper bound for the general grid USO of $d$ dimension, combined with the inherited grid USO structure. Roughly speaking, fix the coordinates of two dimensions of size $n_1$ and $n_2$, and vertices share the fixed two coordinates form a subgrid of the original grid. The vertex set of the inherited grid is the collection of the $n_1\times n_2$ subgrids. The definition and the orientation of the edges are the same as those in the induced grid USO. Let $\hat{N}$ be the sum of the sizes of each dimension and $\hat{T}_v(d)$ be the time overhead for the grid USO of $d$ dimension in the worst case. Running Algorithm \ref{diagonal-algorithm} on the inherited grid USO yields the following recurrence
\[
  \hat{T}_v(d)\leq (n_1+n_2-1)\cdot \hat{T}_v(d-2).
\]
By solving it, we have the following corollary,

\begin{corollary}
  There exists a deterministic algorithm using  $O(\hat{N}^{\lceil d/2\rceil})$ vertex queries in the worst case to find the unique sink of a $d$-dimensional grid USO.
\end{corollary}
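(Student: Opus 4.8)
The plan is to realize Algorithm \ref{diagonal-algorithm} as a recursive procedure that peels off two dimensions at a time, reducing a $d$-dimensional instance to a $2$-dimensional inherited grid whose vertices are $(d-2)$-dimensional subgrids. Concretely, I would single out two dimensions of sizes $n_1$ and $n_2$; fixing the two corresponding coordinates to each of the $n_1\cdot n_2$ possible values carves $G$ into $n_1\cdot n_2$ disjoint $(d-2)$-dimensional subgrids, each itself a grid USO. Arranging these subgrids according to the two chosen coordinates gives a $2$-dimensional $(n_1,n_2)$-grid $H$, the inherited grid, whose edges are oriented exactly as in the induced grid: for adjacent subgrids $x,y$ we set $x\to y$ iff the sink $u_x$ of $x$ has an outgoing edge to a vertex of $y$ in $G$. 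By the higher-dimensional analogue of Lemma \ref{induced} (the construction of G\"{a}rtner et al.), $H$ is a $2$-dimensional grid USO and its sink is the subgrid of $G$ that contains the global sink.

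The heart of the argument is that a single vertex query in $H$ can be simulated by one recursive sink-finding call. To query a vertex $x$ of $H$ I would recursively locate the sink $u_x$ of the $(d-2)$-dimensional subgrid $x$, which costs $\hat{T}_v(d-2)$ queries. The key observation is that the final query of this recursion is performed on $u_x$ itself (the tradition that the sink be queried), and a vertex query in $G$ reveals \emph{all} incident edges of $u_x$, including those lying in the two chosen dimensions. Each neighbour $y$ of $x$ in $H$ differs from $x$ in exactly one chosen coordinate, so $u_x$ has a unique neighbour in $y$, namely its copy along that chosen dimension; by the acyclicity of a grid USO the orientation of this single edge is exactly what determines $x\to y$ or $y\to x$ in $H$. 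Hence reading off the chosen-dimension edges of $u_x$ recovers the full orientation around $x$ in $H$, i.e. one honest vertex query of $H$.

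Running our optimal $2$-dimensional algorithm (Algorithm \ref{diagonal-algorithm}, in its $(m,n)$ form achieving $m+n-1$ queries) on $H$ therefore uses $n_1+n_2-1$ vertex queries of $H$, each costing $\hat{T}_v(d-2)$, which yields the stated recurrence
\[
  \hat{T}_v(d)\leq (n_1+n_2-1)\cdot\hat{T}_v(d-2).
\]
Moreover the last $H$-query lands on the sink of $H$, whose recursion physically queries the global sink of $G$, so the tradition is respected. It remains to unfold the recurrence. For even $d$ I would iterate down to $\hat{T}_v(0)=1$ (a single vertex), and for odd $d$ down to $\hat{T}_v(1)=O(\hat{N})$ (a single-dimension grid is a transitive tournament on at most $\hat{N}$ vertices, whose sink is found in $O(\hat{N})$ queries). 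Since the chosen pairs partition the dimensions, every factor $n_{2k-1}+n_{2k}-1$ is at most $\hat{N}$, so the product of $\lceil d/2\rceil$ such factors is $O(\hat{N}^{\lceil d/2\rceil})$ in both parities.

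I expect the main obstacle to be the simulation step: one must justify carefully that a single recursive run genuinely produces all the information of a vertex query in $H$ — in particular that each inherited edge is decided by exactly one revealed edge of $u_x$, and that the generalized induced-grid orientation is well defined — rather than merely locating the subgrid sinks. The recurrence unfolding and the base cases are then routine.
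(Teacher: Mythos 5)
Your proposal is correct and follows essentially the same route as the paper: form the inherited $(n_1,n_2)$-grid USO over two chosen dimensions, simulate each of its vertex queries by a recursive sink-finding call on a $(d-2)$-dimensional subgrid (whose final query on the local sink reveals the inherited edges), obtain $\hat{T}_v(d)\leq(n_1+n_2-1)\cdot\hat{T}_v(d-2)$, and unfold. You in fact supply details the paper leaves implicit (the simulation step and the base cases); the only nit is that your appeal to ``acyclicity of a grid USO'' is unnecessary and false for $d>2$ --- the inherited edge is determined simply because $u_x$ has a unique neighbour in $y$.
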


\section{Edge Query Model}

Though being optimal, Algorithm \ref{diagonal-algorithm} may not be a good choice to solve optimization problems like one line and $N$ points, for implementing a vertex query actually takes linear time. However, there are potential applications of this result. An immediate one is a fast algorithm under the edge query model.

Throughout this section we assume that $m=n$, since one can always add rows or columns to make the grid square without changing the structure of the original grid and the position of the global sink. We extend the divide-and-conquer strategy in \cite{barba2016deterministic} to an almost linear algorithm using Algorithm \ref{diagonal-algorithm} as a black box. The formal description is depicted as Algorithm \ref{divide-and-conquer}.

\begin{algorithm}
  \caption{Divide-and-Conquer}
  \label{divide-and-conquer}
  \KwIn{An $(n,n)$-grid USO $G$}
  \KwOut{The unique sink of $G$}
  construct an induced $(k,k)$-grid USO $H$ from $G$\;
  run Algorithm \ref{diagonal-algorithm} on $H$ under the vertex query model\;
\end{algorithm}

Let $G$ be an $(n,n)$-grid USO, and $H$ be an induced $(k,k)$-grid USO constructed from $G$. The construction of $H$ is indeed two respective partitions of $[n]$ and takes constant time (line 1). As is shown in Algorithm \ref{divide-and-conquer}, the main idea is to run Algorithm \ref{diagonal-algorithm} on $H$ under the vertex query model (line 2). By Theorem \ref{2n-1}, $2k-1$ vertex queries suffice to determine the sink of $H$. Recall that each vertex in $H$ corresponds to a subgrid in $G$, and that the sink of $H$ is the subgrid of $G$ which contains the sink of $G$. Note that a vertex query returns the orientation of all the incident edges. Hence according to the definition of the orientation of edges in $H$, to implement a vertex query in $H$, we need to (\romannumeral1) find the sink of the corresponding subgrid in $G$ and (\romannumeral2) query all edges incident to this local sink in $G$. In a word, a vertex query in $H$ is equivalent to at most $T_e(\frac{n}{k})+2n-2$ edge queries in $G$. The above argument implies the recurrence below,
\[
  T_e(n)\leq(2k-1)\bigg(T_e\big(\frac{n}{k}\big)+2n-2\bigg).
\]
Note that our careful definition of $T_v(m,n)$, the number of vertex queries in the worst case, pays off here --- the sink of $H$ has already been queried, which means that the sink of the corresponding subgrid in $G$, i.e. the sink of $G$, has been found.

Solve the recurrence will get $T_e(n)=O(n^{\log_k(2k-1)})$ if setting $k=O(1)$, which coincides with the result in \cite{barba2016deterministic} when $k=4$. Furthermore, we set $k=2^{2\sqrt{\log n}}$, where $\log$ means the logarithm to base 2. Assume that $T_e(n)\leq cn\cdot2^{2\sqrt{\log n}}$, where $c>4$, for smaller values of $n$, then we have

\begin{eqnarray*}
  T_e(n) & \leq 2kc\frac{n}{k}\cdot2^{2\sqrt{\log n-\log k}}+4kn \\
  & =2cn\cdot2^{2\sqrt{\log n-2\sqrt{\log n}}}+4n\cdot2^{2\sqrt{\log n}}.
\end{eqnarray*}
To make $T_e(n)\leq cn\cdot2^{2\sqrt{\log n}}$, we only need to assure
\[
  2^{2\big(\sqrt{\log n}-\sqrt{\log n-2\sqrt{\log n}}\big)}\geq\frac{2c}{c-4}.
\]
The left side is monotone decreasing and its limit is 4, hence setting $c\geq8$ the inequality holds. Therefore $T_e(n)=O(n\cdot2^{2\sqrt{\log n}})$.

Notice that $n\cdot2^{2\sqrt{\log n}}=o(n^{1+\epsilon})$, for any $\epsilon>0$. Algorithm \ref{divide-and-conquer} is mildly superlinear. We conclude this section by the following theorem,

\begin{theorem}
  There exists an $O(N\cdot2^{2\sqrt{\log N}})$ deterministic algorithm to find the sink of an $(m,n)$-grid USO under the edge query model, where $N=m+n$.
\end{theorem}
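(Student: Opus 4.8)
The plan is to build a divide-and-conquer algorithm that calls the optimal vertex-query procedure (Algorithm \ref{diagonal-algorithm}) of the previous section as a black box, and then to solve the resulting recurrence with a carefully chosen block size. Since we may pad the grid with extra rows or columns without disturbing the USO structure or the location of the global sink, I first reduce to the square case $m=n$, so that it suffices to bound $T_e(n)$ and then translate back through $N=m+n$ at the end.

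Given an $(n,n)$-grid USO $G$, I would partition $[n]$ into $k$ blocks of size $n/k$ in each coordinate and form the induced $(k,k)$-grid USO $H$. By Lemma \ref{induced}, $H$ is again a grid USO and its sink is exactly the block-subgrid of $G$ that contains the global sink of $G$; since $H$ is just two partitions of $[n]$, forming it costs no queries. Running Algorithm \ref{diagonal-algorithm} on $H$ under the vertex query model then locates this subgrid using at most $2k-1$ vertex queries, by Theorem \ref{2n-1}. The convention adopted in defining $T_v$ (that the sink is always queried) is what guarantees that once the algorithm terminates on $H$ the local sink of the winning subgrid — hence the global sink of $G$ — has actually been found, so no separate final step is required.

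The next step is to simulate each vertex query in $H$ by edge queries in $G$. A vertex of $H$ is a $\frac{n}{k}\times\frac{n}{k}$ subgrid of $G$, and the orientation in $H$ of its incident edges is determined entirely by its local sink in $G$. So I would first find that local sink recursively, at cost $T_e(n/k)$, and then query all edges incident to it in $G$, at cost at most $2n-2$, from which the orientations of all incident edges of the vertex in $H$ can be read off. Thus one vertex query in $H$ costs at most $T_e(n/k)+2n-2$ edge queries, which yields the recurrence
\[
T_e(n)\leq(2k-1)\left(T_e\left(\frac{n}{k}\right)+2n-2\right).
\]

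The main obstacle is solving this recurrence. With constant $k$ it unfolds to $O(n^{\log_k(2k-1)})$, which is strictly superlinear for every fixed $k$; to approach linearity I let $k$ grow so as to balance the branching factor $2k-1$ against the recursion depth $\log_k n$, and the sweet spot turns out to be $k=2^{2\sqrt{\log n}}$. I would verify the claimed bound by induction: assuming $T_e(n')\leq c\,n'\,2^{2\sqrt{\log n'}}$ for all smaller arguments $n'$, with $c>4$ a constant to be fixed, substituting this choice of $k$ into the recurrence reduces the inductive step to checking
\[
2^{2\left(\sqrt{\log n}-\sqrt{\log n-2\sqrt{\log n}}\right)}\geq\frac{2c}{c-4}.
\]
The left-hand side is monotone decreasing in $n$ with limit $4$, so any $c\geq 8$ closes the induction and gives $T_e(n)=O\left(n\cdot 2^{2\sqrt{\log n}}\right)$. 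Finally, since $n\cdot 2^{2\sqrt{\log n}}=o(n^{1+\epsilon})$ for every $\epsilon>0$, passing back through $N=m+n$ delivers the stated mildly superlinear bound.
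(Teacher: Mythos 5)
Your proposal is correct and follows essentially the same route as the paper: reduce to the square case, build the induced $(k,k)$-grid, simulate each vertex query in $H$ by $T_e(n/k)+2n-2$ edge queries via the local sink, and solve the resulting recurrence with $k=2^{2\sqrt{\log n}}$ by the same induction with $c\geq 8$. No substantive differences to report.
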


\section{Conclusion}
In this paper, we have discovered a new combinatorial property (Lemma \ref{row and column}) of the 2-dimensional grid USO and developed deterministic algorithms for both oracle models based on it, one optimal, the other nearly optimal. In the randomized setting, however, all the known randomized algorithms only reach an upper bound of $O(\log^2 N)$, against the lower bound of $\Omega(\log N)$. By further exploiting Lemma \ref{row and column}, one may devise an optimal algorithm to close the gap. In the general $d$-dimensional grid USO, it is of interest whether there exists a similar combinatorial property.



\bibliographystyle{plain}
\bibliography{UniqueSink}


\end{document}